\begin{document}

\let\kappa=\varkappa
\let\eps=\varepsilon
\let\phi=\varphi
\let\p\partial

\def\Z{\mathbb Z}
\def\R{\mathbb R}
\def\C{\mathbb C}
\def\Q{\mathbb Q}
\def\P{\mathbb P}
\def\HH{\mathrm{H}}
\def\ss{{X}}

\def\conj{\overline}
\def\Beta{\mathrm{B}}
\def\const{\mathrm{const}}
\def\ov{\overline}

\renewcommand{\Im}{\mathop{\mathrm{Im}}\nolimits}
\renewcommand{\Re}{\mathop{\mathrm{Re}}\nolimits}
\newcommand{\codim}{\mathop{\mathrm{codim}}\nolimits}
\newcommand{\id}{\mathop{\mathrm{id}}\nolimits}
\newcommand{\Aut}{\mathop{\mathrm{Aut}}\nolimits}
\newcommand{\alk}{\mathop{\mathrm{alk}}\nolimits}
\newcommand{\lk}{\mathop{\mathrm{lk}}\nolimits}
\newcommand{\Ker}{\mathop{\mathrm{Ker}}\nolimits}
\newcommand{\sign}{\mathop{\mathrm{sign}}\nolimits}
\newcommand{\rk}{\mathop{\mathrm{rk}}\nolimits}
\def\Jet{{\mathcal J}}

\renewcommand{\mod}{\mathrel{\mathrm{mod}}}



\newtheorem{mainthm}{Theorem}
\renewcommand{\themainthm}{{\Alph{mainthm}}}
\newtheorem{thm}{Theorem}[subsection]
\newtheorem{lem}[thm]{Lemma}
\newtheorem{prop}[thm]{Proposition}
\newtheorem{cor}[thm]{Corollary}

\theoremstyle{definition}
\newtheorem{exm}[thm]{Example}
\newtheorem{rem}[thm]{Remark}
\newtheorem{df}[thm]{Definition}

\renewcommand{\thesubsection}{\arabic{subsection}}

\title[Affine linking number estimates]{Affine linking number estimates for the number of times an observer sees a star}
\author{Vladimir Chernov and Ryan Maguire}
\address{6188 Kemeny Hall, Department of Mathematics, Dartmouth College, Hanover, NH 03755, USA}
\email{Vladimir.Chernov@dartmouth.edu}
\address{6188 Kemeny Hall, Department of Mathematics, Dartmouth College, Hanover, NH 03755, USA}
\email{Ryan.J.Maguire.GR@dartmouth.edu}
\begin{abstract}
Affine linking numbers are the generalization of linking numbers to the case of nonzero homologous linked submanifolds. They were introduced by Rudyak and the first author who used them to study causality in globally hyperbolic spacetimes. 

In this paper we use affine linking numbers to estimate the number of times an observer sees light from a star, that is how many copies of the star do they see on the sky due to gravitational lensing.
\end{abstract}

\subjclass{Primary 57K45; Secondary 83C75, 83F99} 

\maketitle

We work in the $C^{\infty}$-category and the word {\it smooth\/} means $C^{\infty}.$ All the manifolds, maps
etc.~are assumed to be smooth unless the opposite is explicitly stated. The manifolds are assumed to be oriented.

\subsection{Lorentz geometry: conventions and definitions\/}
In this section we introduce the basic conepts of Lorentz geometry following the paper of  Nemirovski and the first author~\cite{ChernovNemirovski}.

Let $(\ss^{m+1}, g)$ be an $(m+1)$-dimensional Lorentz manifold and $p\in \ss$. A nonzero ${\bf v}\in T_p\ss$ 
is called {\it spacelike, timelike, nonspacelike, or null\/} if $g({\bf v}, {\bf v})$ is respectively 
positive, negative, non-positive or   zero. A piecewise smooth curve is timelike if all of its velocity 
vectors are timelike. Nonspacelike, spacelike and null curves are defined similarly. Since $(\ss, g)$ has a unique 
Levi-Cevita connection, see for example~\cite[page 22]{BeemEhrlichEasley}, we can talk about spacelike, 
timelike and null geodesics. A submanifold $M\subset \ss$ is {\it spacelike} if $g$ restricted to $TM$ is a 
Riemann metric.

The set of all nonspacelike vectors in $T_p\ss$ is a cone consisting of two hemicones, and the continuous 
with respect to $p\in \ss$ choice of one of the two hemicones is called the {\it time orientation\/} of
$(\ss, g)$. The vectors from the chosen hemicones are called {\it future pointing.\/}
A time oriented Lorentz manifold is called a {\it spacetime\/} and the points of it are called events

For $x$ in a spacetime $(\ss, g)$ its {\it causal future\/} $J^+(x)\subset \ss$ 
 is the set of all $y\in \ss$ that can be reached by a 
future pointing nonspacelike curve from $x.$ 
The causal past $J^-(x)$ of the event $x\in \ss$ is defined similarly. 

Two events  $x,y$ are said to be {\it causally related\/} if $x\in J^+(y)$ or $y\in J^+(x).$

A spacetime is said to be {\it globally hyperbolic\/} if $J^+(x)\cap J^-(y)$ is compact for every $x,y\in \ss$
and if it is {\it causal,\/} i.e.~it has no closed nonspacelike curves. The classical definition 
of a global hyperbolicity requires $(\ss,g)$ to be strongly causal rather than just causal, see~\cite{HawkingEllis}, however
Bernal and Sanchez~\cite[Theorem 3.2]{BernalSanchezCausal} proved that that two definitions are equivalent.  The compactness condition is known as {\em absence of naked singularities.}

A {\it Cauchy surface\/} in $(\ss, g)$ is a subset such that every inextendible nonspacelike curve 
$\gamma(t)$ intersects it at exactly one value of $t.$ A classical result, see~\cite[pages 211-212]{HawkingEllis}, is that $(\ss, g)$ is globally 
hyperbolic if and only if it has a Cauchy surface. 
Geroch~\cite{Geroch} proved that every globally hyperbolic $(\ss, g)$ is homemorphic to a product of a Cauchy surface $M$ and $\R$ with each $M\times t$ being a Cauchy surface. Bernal and Sanchez~\cite[Theorem 1]{BernalSanchez},~\cite[Theorem
1.1]{BernalSanchezMetricSplitting},~\cite[Theorem 1.2]{BernalSanchezFurther} proved much more namely that every globally 
hyperbolic $(\ss^{m+1}, g)$ has a smooth spacelike Cauchy surface $M^m$ and that moreover
for every smooth spacelike Cauchy surface $M$ there is a
diffeomorphism  $h:M\times \R\to \ss$ such that 
\begin{description}
\item[a] $h(M\times t)$ is a smooth spacelike Cauchy surface
for all $t$, 
\item[b] $h(x\times \R)$ is a future pointing timelike curve for all $x\in M$, and finally 
\item[c] $h(M\times 0)=M$ with $h|_{M\times 0}:M\to M$ being the identity map.
\end{description}


Globally hyperbolic spacetimes form the most important class of spacetimes and one of the versions of the famous Strong Cosmic Censorship Conjecture of R.~Penrose says that all physically relevant spacetimes are globally hyperbolic~\cite{Penrose}. What happens inside of the black holes is not relevant for us living outside of them so the conjecture says that if you cut out the black holes along the event horizons then what is left is globally hyperbolic.

Let $\mathfrak N$ be the space of all future pointing null-geodesics in $(\ss, g)$ considered up to affine 
orientation preserving reparameterization.
A null geodesic $\gamma(t)$ intersects a Cauchy surface $M$ at a 
time $\ov t.$ We can decompose $T_{\gamma(\ov t)}\ss$ into a direct sum of $T_{\gamma(\ov t)}M$ and its 
$g$-orthogonal compliment. Since $M$ is spacelike and $\gamma$ is a null geodesic, 
the $T_{\gamma(\ov t)}M$-component of $\gamma'(\ov t)\neq {\bf 0}$ and it gives a point in the spherical 
tangent bundle $STM$ of $M.$ This identifies $\mathfrak N$ with $STM$. Since $M$ is spacelike we identify 
$TM$ with $T^*M$ and $\mathfrak N=STM=ST^*M.$ Low~\cite{LowLegendrian} showed that if $M'$ is another spacelike
Cauchy surface then $ST^*M'\to \mathfrak N\to ST^*M$ is a contactomorphism and thus his identification 
$\mathfrak N=ST^*M$  equips $\mathfrak N$ with the natural contact structure. 
Low's work~\cite{Low0, LowLegendrian} deals with $3+1$-dimensional $\ss$, but the fact and his proof hold in all 
dimensions, see also Natario and Tod~\cite[pages 252-253]{NatarioTod}.

The {\em sky} $S_p\mathfrak N$ of an event $p\in \ss$ is the Legendrian sphere of all light rays passing through $p\in \ss$. A sky $S_p$ is isotopic to the sphere fiber of $\mathfrak N=ST^*M\to M$ and two skies are called {\em unlinked \/}if they are isotopic to the link consisting of the pair of fibers over two distinct  points.

\subsection{Affine linking numbers and their application }
The classical linking number $\lk$ of two zero homologous submanifolds $P_1^{p_1}, P_2^{p_2}$ in $Q^{p_1+p_2+1}$ was introduced by Gauss and it is the signed number of intersection points of $P_1$ and a compact oriented submanifold bounded by $P_2.$ This does not depend on the order of the submanifolds $P_1, P_2.$ 

Clearly under the passage through a double point of a link under its homotopy $\lk$ changes by $\Delta_{\lk}$ which is the sign of the orientation frame given by concatenating the orientation frame of $P_1,$ of $ P_2$ and adding the vector going from the preimage of the double point on $P_1$ to the preimage on $P_2$ when you resolve the double point. 

Affine linking numbers $\alk$ were introduced by Rudyak and the first author~\cite{ChernovRudyakGT, ChernovRudyak} and they are defined as follows: for the pair of sphere fibers of $ST^*M$ over two distinct points the $\alk$ is zero and it increases by one under every positive passage through a transverse double point of the link. Rudyak and the first author~\cite{ChernovRudyak}[Theorem 2] showed that the affine linking number of a pair of skies in $ST^*M=\mathfrak N$ is a well defined $\Z$-valued invariant provided that $M$ is {\bf not} an odd-dimensional rational homology sphere. In the last case $\alk$ takes values in $\Z/\Im \deg$ where $\deg:\pi_m(M)\to \Z$ is the degree homomorphism. If two points are causally unrelated then $\alk$ of their skies equals zero. Moreover $\alk$ is symmetric.

For a pair of events $p_1, p_2$ in the future of $q$ connected by a generic timelike curve $\gamma=p(t)$ Rudyak and the first author proved that $\alk(S_{p_1}, S_q)-\alk(S_{p_2}, S_q)$ equals to the sum of the signs of the intersection points $\gamma\cdot\exp(N_q)$~\cite{ChernovRudyak}[Theorem 4], where $\exp(N_q)$ is the exponential of the null cone of $q.$ Every timelike curve can be made generic by a $C^{\infty}$-small perturbation so that it intersects $\exp(N_q)$ at the points where the exponential is immersed and there are no multiple points of $\exp(N_q)$ at such intersection points and moreover the intersection is transverse at all points.

Following Perlick~\cite{Perlick}[Section 2.8] we observe that for a timelike curve $\gamma$ that is a world line of a light source (with starting point $p_1$ and ending point $p_2$) and a point $q$ in the causal future of $\gamma$ the number of times an observer at $q$ sees light from $\gamma$ due to gravitational lensing is the number of light geodesics connecting a point of $\gamma$ with $q.$ Thus reversing the role of the points $p$ and $q$ in the above result~\cite{ChernovRudyak}[Theorem 4] we see that in the case where the affine linking number is a $\Z$-valued invariant we have that $\alk(S_{p_1}, S_q)-\alk(S_{p_2}, S_q)$ is the algebraic number of signs of times the point $q$ sees light from the light source with the world line $\gamma$ connecting $p_1$ and $p_2.$

Affine linking numbers can be used to estimate the number of times an observer sees the same star on the night sky, for example this happens in the famous Einstein Crosses of which about 50 examples were found so far, see Stern et al~\cite{Stern}.

\begin{thm}\label{theorem1}
Assume $(\ss, g)$ is globally hyperbolic and Cauchy surface of it is {\bf not} an odd dimensional rational homology sphere.
Let $\gamma:(-\epsilon, \epsilon)\to \ss$ be a small timelike curve that passes through $p$ at time moment $0.$ Assume that 
$\bigl (\alk( S_{\gamma(\epsilon)}, S_q) - \alk( S_{\gamma(-\epsilon),S_q})\bigr)
=N\in \Z$ for all small $\epsilon$ then the observer at $q$ sees light from $p$ coming from at least $|N|$ different directions.
\end{thm}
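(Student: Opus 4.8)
The plan is to localise the problem at $p$, apply the intersection formula of Rudyak and the first author quoted above, and then read off the estimate from the local structure of the light cone of $q$.

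The number of directions in which the observer at $p$ sees the star $q$ equals the number $n$ of distinct future pointing null geodesics from $q$ to $p$: two of them arriving at $p$ in the same direction would coincide by uniqueness of geodesics, and each of them is a light ray from $q$ reaching $p$; equivalently $n=\#(S_p\cap S_q)$. If $n=\infty$, or if $N\le0$ (since $n\ge0$), there is nothing to prove, so assume $n<\infty$ and $N\ge1$. Then necessarily $q\in J^-(p)$: otherwise, using that $J^+(q)$ is closed, the spacetime is causal and $\gamma$ is small, $\gamma$ meets $\exp(N_q)$ near $p$ at most through the past light cone of $q$, which a future pointing timelike curve crosses so as to make $t\mapsto\alk(S_{\gamma(t)},S_q)$ non-increasing, forcing $N\le0$. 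For the affine linking numbers in the hypothesis to be defined we must have $\gamma(\pm\epsilon)\notin\exp(N_q)$ for all small $\epsilon$; fix $\epsilon_0>0$ with $\gamma(t)\notin\exp(N_q)$ for $0<|t|<\epsilon_0$. Since $\alk(S_{\gamma(t)},S_q)$ is an invariant of the link $S_{\gamma(t)}\sqcup S_q$, which varies by an isotopy of disjoint skies while $\gamma(t)\notin\exp(N_q)$, the map $t\mapsto\alk(S_{\gamma(t)},S_q)$ is constant on $(-\epsilon_0,0)$ and on $(0,\epsilon_0)$, the two values differing by $N$; so the whole jump of $\alk$ occurs at $t=0$ and $p\in\exp(N_q)$.

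Pick $\epsilon_1\in(0,\epsilon_0)$ with $\gamma([-\epsilon_1,\epsilon_1])$ inside a small neighbourhood $U$ of $p$, and, keeping the endpoints fixed, perturb $\gamma|_{[-\epsilon_1,\epsilon_1]}$ by a $C^\infty$-small deformation through future pointing timelike curves to a curve $\widetilde\gamma$ that meets $\exp(N_q)$ only at embedded immersion points of $\exp|_{N_q}$, and transversally there. Since $\gamma$ met $\exp(N_q)$ inside $[-\epsilon_1,\epsilon_1]$ only at $p$ and $\exp(N_q)$ is closed, all crossings of $\widetilde\gamma$ lie in $U$; and as $\widetilde\gamma$ shares its endpoints with $\gamma$, the Chernov--Rudyak formula gives $N=\alk(S_{\widetilde\gamma(\epsilon_1)},S_q)-\alk(S_{\widetilde\gamma(-\epsilon_1)},S_q)=\widetilde\gamma\cdot\exp(N_q)$, a signed count of transverse crossings all lying in $U$. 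Each such crossing is positive: near it $\exp(N_q)$ is a smooth null hypersurface co-oriented by future pointing timelike vectors, and a future pointing timelike curve always crosses such a hypersurface with that co-orientation, since for a null linear functional $\ell$ the sign of $\ell$ on future timelike vectors is constant, so $t\mapsto\ell(\widetilde\gamma(t))$ is strictly monotone. The same remark, applied in a small neighbourhood where a null hypersurface is $C^1$-close to its tangent null hyperplane, shows that $\widetilde\gamma$ meets each such hypersurface at most once. Hence $N=\widetilde\gamma\cdot\exp(N_q)\ge0$.

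It remains to bound the number of crossings by $n$. In the generic case, where the $n$ null geodesics from $q$ to $p$ all reach points of $N_q$ at which $\exp|_{N_q}$ is an embedding, the set $\exp(N_q)\cap U$ is --- for $U$ small enough, a limit curve argument using global hyperbolicity ruling out other sheets of the cone accumulating at $p$ --- the union of exactly $n$ smooth null hypersurfaces through $p$; since $\widetilde\gamma$ is $C^1$-close to $\gamma$, which crosses each of these transversally at $p$, it meets each of them exactly once, so $N=n$ and the conclusion holds with equality. The main obstacle is the remaining case, in which some of these geodesics pass through a conjugate point of $q$, so that $p$ is a caustic point of the light cone of $q$ and $\exp(N_q)$ is singular at $p$: one still expects the local intersection number of the timelike $\gamma$ with $\exp(N_q)$ at $p$ to be at most $n$, because the singular strata of a light cone are tangent to null hyperplanes whereas a timelike curve is transverse to every null hyperplane and so accumulates no extra crossings there, but making this precise --- via the normal forms for caustics of Legendrian fronts, or a direct semicontinuity estimate for the number of null geodesics from $q$ --- is the delicate part of the argument.
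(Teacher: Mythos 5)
Your overall route is the one the paper takes: use the Chernov--Rudyak formula to turn the jump $N$ of $\alk$ into the signed count of crossings of a generic small timelike curve with $\exp(N_q)$, and then interpret the crossings accumulating at $p$ as light rays from $q$ reaching the observer. But your write-up stops exactly where Theorem~\ref{theorem1} has content beyond Theorem~\ref{theorem2}: the case in which some null geodesic from $q$ to $p$ contains conjugate points, so that $p$ lies on the caustic of $\exp(N_q)$. In that case several crossings of the perturbed curve $\widetilde\gamma$ can limit to one and the same direction at $p$ (already for a swallowtail point of the front a nearby timelike curve meets three local sheets all coming from a single null geodesic through $p$), so the bound ``number of crossings $\le n$'' that your generic-case argument uses simply fails; what must be shown is that the \emph{net signed} contribution of all crossings limiting to a single direction at $p$ is at most $1$. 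You explicitly label this ``the delicate part'' and give no argument for it, yet this is precisely the situation that the non-negative timelike sectional curvature hypothesis of Theorem~\ref{theorem2} is designed to exclude, so leaving it open leaves the actual theorem unproved. Closing it requires the sign bookkeeping of~\cite{ChernovRudyak} (each crossing at an immersed point carries the sign $(-1)^{\mu}$, $\mu$ the number of conjugate points, with multiplicity, along the corresponding null geodesic from $q$), together with properness of $\exp|_{N_q}$ coming from global hyperbolicity to organize the limit $\epsilon\to 0$.

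Related to this, your claim that every transverse crossing at an immersed point of $\exp|_{N_q}$ is positive (whence $N=\widetilde\gamma\cdot\exp(N_q)\ge 0$ and $N=n$ in the generic case) uses the wrong co-orientation: the sign in the $\alk$ jump formula is computed with respect to the orientation pushed forward from $N_q$ by $\exp$, not with respect to the future side of the local null hypersurface, and the two disagree past conjugate points. If all crossings were automatically positive, the curvature hypothesis in Theorem~\ref{theorem2} and the Remark after it (``the terms \dots do not cancel'') would be pointless. This slip does not hurt the lower bound itself, since only $N\le\#\{\text{crossings}\}$ is needed there, but it shows that the sign analysis --- exactly what the unresolved caustic case demands --- is missing from the proposal.
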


\begin{proof} The number of times an observer at $q$ sees light from $p$ is the limit as $\epsilon \to 0$ of the number of light geodesics connecting points of $\gamma$ with $q.$ The last quantity by the discussion before this Theorem is the difference of $\alk(S_{\gamma(\epsilon)}, S_q)$ and $\alk(S_{\gamma(-\epsilon), S_q})$.\end{proof}

As it was explained in~\cite{ChernovRudyak}[Theorem 4] if all the timelike sectional curvatures of $(\ss, g)$ are non-negative then the exponential of the null cone $N_q$ of $q$ is an immersion. Hence the terms in the computation of the intersection number of $\exp(N_q)$ and $\gamma$ do not cancel and hence the inputs into $\alk$ can not cancel as well. So we get the following Theorem.

\begin{thm}\label{theorem2}
Assume $(\ss, g)$ is globally hyperbolic and Cauchy surface of it is {\bf not} an odd dimensional rational homology sphere. Assume moreover that all the timelike sectional curvatures of $(\ss,g)$ are non-negative.
Let $\gamma:(-\epsilon, \epsilon)\to \ss$ be a small timelike curve that passes through $p$ at time moment $0.$ Assume that $\bigl ( \alk( S_{\gamma(\epsilon), S_q}) - \alk(S_{\gamma(-\epsilon),S_q})\bigr)
=N$ for all small $\epsilon$ then the observer at $q$ sees light from $p$ coming from exactly $N=|N|$ different directions.\qed
\end{thm}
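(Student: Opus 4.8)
The plan is to derive the statement from the discussion preceding it, the only genuinely new input being the absence of cancellation among the intersection signs.

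First I would set up the count. Since the Cauchy surface is not an odd-dimensional rational homology sphere, $\alk(S_{\gamma(t)},S_q)$ is a well-defined $\Z$-valued function of $t$; as recalled above it is locally constant, changing only when $\gamma$ meets $\exp(N_q)$ and jumping by $\pm1$ at each transverse passage through a sheet of $\exp(N_q)$, while the number of distinct directions in which the observer at $\gamma(t)$ sees $q$ equals the number $n(\gamma(t))$ of null geodesics from $q$ to $\gamma(t)$, that is, the number of sheets of $\exp(N_q)$ through $\gamma(t)$. The hypothesis that $\alk(S_{\gamma(\epsilon)},S_q)-\alk(S_{\gamma(-\epsilon)},S_q)=N$ for \emph{all} small $\epsilon$ forces every passage of $\gamma$ through $\exp(N_q)$ to occur at the single instant $t=0$ (a passage at some $t_*\neq0$ would be excluded once $|\epsilon|<|t_*|$, changing the difference), so $N=\sum_{\Sigma}\epsilon(\Sigma)$, the sum running over the sheets $\Sigma$ of $\exp(N_q)$ through $p$ and $\epsilon(\Sigma)=\pm1$ being the sign with which $\gamma$ crosses $\Sigma$. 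Since there are $n(p)$ such sheets, the triangle inequality gives $|N|\le n(p)$, which is Theorem~\ref{theorem1}.

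Second I would use the curvature hypothesis to force all the $\epsilon(\Sigma)$ to agree. By the paragraph preceding the statement, non-negative timelike sectional curvature makes $\exp_q|_{N_q}$ an immersion, so near each of its points $\exp(N_q)$ is a finite union of smooth null hypersurfaces, one sheet for each null geodesic from $q$ through that point. For a smooth null hypersurface $\Sigma$ the induced metric is positive semi-definite, hence has no timelike tangent vector, so the timelike curve $\gamma$ is automatically transverse to every sheet, with local intersection multiplicity $1$; thus there are exactly $n(p)$ sheets through $p$. It then remains to check that a future-pointing timelike curve crosses every sheet of $\exp(N_q)$ with the same sign. The plan here is to co-orient each sheet $\Sigma$ by pairing its future-pointing null generator (the radical of the degenerate induced metric, oriented by the time orientation) with the ambient orientation of $\ss$, and then to verify that every future-pointing timelike vector lies on the positive side of $\Sigma$ for this co-orientation. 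Granting this, $\epsilon(\Sigma)$ is the same for all sheets through $p$, whence $|N|=\sum_{\Sigma}|\epsilon(\Sigma)|=n(p)$; and with the usual conventions the common value of $\epsilon(\Sigma)$ is $+1$, so $N=|N|=n(p)$, which is the assertion.

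The step I expect to be the main obstacle is exactly this sign-coherence: fixing orientation conventions so that the co-orientation of a sheet of $\exp(N_q)$ is independent of choices, and then showing that no future-pointing timelike curve can reverse its crossing sign. I expect the verification to reduce to the facts that a sheet $\Sigma$ locally separates $\ss$ into two sides and that the future timelike cone, being an open convex cone in each tangent space and varying continuously along the connected hypersurface $\Sigma$, points into one and the same side throughout --- equivalently, $g(V,L)$ keeps a constant sign as $V$ ranges over future-timelike vectors and $L$ is the future null generator, and the side of $\Sigma$ selected by $V$ is read off from this sign. The remaining points --- that ``for all small $\epsilon$'' concentrates every jump of $\alk$ at $t=0$, and that the (at worst locally finite-to-one) immersion $\exp_q|_{N_q}$ makes the multiplicity of $\gamma\cap\exp(N_q)$ at $p$ equal to the number of images seen --- are routine.
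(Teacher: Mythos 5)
Your proposal is correct and follows essentially the same route as the paper: the paper likewise reduces the jump of $\alk$ to the signed intersection of $\gamma$ with $\exp(N_q)$ and then invokes the curvature hypothesis (via~\cite{ChernovRudyak}) to get that $\exp(N_q)$ is an immersion, so that all crossings of the timelike curve with the null-cone sheets carry the same (positive) sign and no cancellation occurs. You merely spell out explicitly the sign-coherence and the concentration of the jumps at $t=0$, which the paper delegates to the preceding discussion and the remark following the theorem.
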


\begin{rem}
Since the exponential of the null cone of $q$ is an immersion, all the intersection points of it with $\gamma$ are positive.
\end{rem}

\begin{rem}
An example of such a spacetime is any of the warped Lorentz spacetime products of a complete Riemann manifold of non-positive sectional curvature and $\R$ with any positive warping function.
\end{rem}

\centerline{\bf Acknowledgements}
This work was partially supported by a grant from the Simons Foundation
($\# 513272$ to Vladimir Chernov).

We are thankful to the anonymous referees for the careful reading and suggested improvements.


\begin{thebibliography}{99}
\bibitem{BeemEhrlichEasley}
J.~K.~Beem, P.~E.~Ehrlich, K.~L.~Easley: {\em Global Lorentzian
geometry.\/} Second edition. Monographs and Textbooks in Pure and
Applied Mathematics, {\bf 202\/} Marcel Dekker, Inc., New York (1996)
\bibitem{BernalSanchezCausal}
A.~Bernal and M.~Sanchez: {\it Globally hyperbolic spacetimes can be defined as ``causal'' instead of 
``strongly causal''},  Class.~Quant.~Grav.~24 (2007) 745-750
\bibitem{BernalSanchez}
A.~Bernal, M.~Sanchez: {\em On smooth Cauchy hypersurfaces and
Geroch's splitting theorem, \/} Comm.~Math.~Phys.~{\bf 243\/}
(2003), no.~3, 461--470
\bibitem{BernalSanchezMetricSplitting}
A.~Bernal, M.~Sanchez: {\em Smoothness of time functions and the
metric splitting of globally hyperbolic space-times.\/} Comm.~Math.~Phys.~{\bf 257} (2005), no.~1, 43--50.
\bibitem{BernalSanchezFurther}
A.~Bernal and M.~Sanchez: {\em Further results on the smoothability
of Cauchy hypersurfaces and Cauchy time functions.\/} Lett.~Math.~Phys.~{\bf 77} (2006), no.~2, 183--197.
\bibitem{ChernovNemirovski}
V.~Chernov, S.~Nemirovski: {\em Legendrian links, causality and the Low conjecture,\/} Geom. Funct. Anal. 19 (2010), 1320--1333
\bibitem{ChernovRudyakGT}
V.~Chernov (Tchernov) and Yu.~Rudyak: {\em Toward a general theory of linking invariants,} 
Geom. Topol. {\bf 9} (2005), 1881--1913.
\bibitem{ChernovRudyak}
V.~Chernov (Tchernov) and Yu.~Rudyak:  {\it Linking and Causality in Globally Hyperbolic Space-times,\/} 
Communications in Mathematical Physics, {\bf 279}, no 2, (2008) pages 309-354
\bibitem{Geroch}
R.~P.~Geroch: {\em Domain of dependence\/}, J.~Math.~Phys., {\bf 11}
(1970) pp.~437--449
\bibitem{HawkingEllis}
S.~W.~Hawking and G.~F.~R.~Ellis: {\em The large scale structure of
space-time.\/} Cambridge Monographs on Mathematical Physics, No.~1,
Cambridge University Press, London-New York, (1973)
\bibitem{Low0}
R.~J.~Low: {\em Causal relations and spaces of null geodesics.\/}
PhD Thesis, Oxford University (1988)
\bibitem{LowLegendrian}
R.~J.~Low: {\em Stable singularities of wave-fronts in general relativity.\/} J.~Math.~Phys.~{\bf 39} (1998), no.~6, 3332--3335
\bibitem{NatarioTod}
J.~Natario and P.~Tod: {\em Linking, Legendrian linking and
causality.\/} Proc.~London Math.~Soc.~(3) {\bf 88} (2004), no.~1,
251--272.

\bibitem{Penrose}
R.~Penrose: {\it The question of cosmic censorship},
Black holes and relativistic stars (Chicago, IL, 1996), pp.~103--122, 
Univ. Chicago Press, Chicago, IL, (1998)

\bibitem{Perlick}
V.~Perlick: {\em Gravitational Lensing from a Spacetime Perspective.} Living Reviews in Relativity. {\bf 7}, 9 (2004)

\bibitem{Stern}
 Stern et al: {\em Gaia GraL: Gaia DR2 Gravitational Lens Systems. VI. Spectroscopic  Confirmation and Modeling of Quadruply-Imaged Lensed Quasars}, The Astrophysical Journal, {\bf 921} , no. 1, 16 pp, DOI 10.3847/1538-4357/ac0f04

\end{thebibliography}
\end{document}